\newcommand{\arxiv}[1]{{\tt \href{http://arxiv.org/abs/#1}{arXiv:#1}}}
\newcommand{\floor}[1]{\left\lfloor {#1} \right\rfloor}
\newcommand{\modulo}[1]{\quad (\mbox{mod }{#1})}
\newcommand{\modnospace}[1]{\; (\mbox{mod }{#1})}
\newcommand{\ghost}[1]{\textcolor{white}{#1}}
\newcommand{\old}[1]{}
\newcommand{\moniker}[1]{{\em (#1)}}
\newcommand{\dmoniker}[1]{(#1)}
\DeclareRobustCommand{\SkipTocEntry}[5]{}
\newcommand{\silentsec}[1]{\addtocontents{toc}{\SkipTocEntry}\section*{#1}}  
\newcommand{\silentsubsec}[1]{\addtocontents{toc}{\SkipTocEntry}\subsection*{#1}}
\newtheorem{theorem}{Theorem}[section]
\newtheorem{lemma}[theorem]{Lemma}
\theoremstyle{remark}
\newtheorem{remark}[theorem]{Remark}
\numberwithin{counter}{section}
\theoremstyle{definition}
\newtheorem{definition}[theorem]{Definition}
\def\bb{\mathbf{b}}
\def\cc{\mathbf{c}}
\def\qq{\mathbf{q}}
\def\rr{\mathbf{r}}
\def\uu{\mathbf{u}}
\def\vv{\mathbf{v}}
\def\xx{\mathbf{x}}
\def\yy{\mathbf{y}}
\def\Proc{\CMcal{P}}  
\def\Net{{\mathcal{N}}}  
\def\zero{\mathbf{0}}
\def\one{\mathbf{1}}
\def\basis{1}
\def\Sand{{\tt Sand}}
\def\N{\mathbb{N}}
\def\Z{\mathbb{Z}}
\def\R{\mathbb{R}}
\def\eps{\epsilon}
\begin{document}

\title[Abelian Networks]{Abelian Networks I. Foundations and Examples}

\author[Benjamin Bond and Lionel Levine]{Benjamin Bond and Lionel Levine}

\address{Benjamin Bond, Department of Mathematics, Stanford University, Stanford, California 94305. {\tt\url{http://stanford.edu/~benbond}}}

\address{Lionel Levine, Department of Mathematics, Cornell University, Ithaca, NY 14853. {\tt \url{http://www.math.cornell.edu/~levine}}}

\date{February 3, 2016}
\keywords{abelian distributed processors, asynchronous computation, chip-firing, finite automata, least action principle, local-to-global principle, monotone integer program, rotor walk}

\subjclass[2010]{
68Q10, 
37B15, 
90C10 
}

\begin{abstract}
In Deepak Dhar's model of abelian distributed processors, automata occupy the vertices of a graph and communicate via the edges. We show that two simple axioms ensure that the final output does not depend on the order in which the automata process their inputs.  A collection of automata obeying these axioms is called an \emph{abelian network}.
We prove a least action principle for abelian networks.  As an application, we show how abelian networks can solve certain linear and nonlinear integer programs asynchronously.  
In most previously studied abelian networks, the input alphabet of each automaton consists of a single letter; in contrast, we propose two non-unary examples of abelian networks: \emph{oil and water} and \emph{abelian mobile agents}.
\end{abstract} 

\maketitle
%
\section{Introduction}

In recent years, it has become clear that certain interacting particle systems studied in combinatorics and statistical physics have a common underlying structure.  These systems are characterized by an \emph{abelian property} which says changing the order of certain interactions has no effect on the final state of the system.  
Up to this point, the tools used to study these systems -- least action principle, local-to-global principles, burning algorithm, transition monoids and critical groups -- have been developed piecemeal for each particular system.  Following Dhar \cite{Dha99a}, we aim to identify explicitly what these various systems have in common and exhibit them as special cases of what we call an \emph{abelian network}.

\old{
Intuition suggests that noncommutativity is a major source of dynamical richness and complexity.  Yet abelian networks produce surprisingly rich and intricate large-scale patterns from local rules \cite{Ost03,DSC09,DD13}. Just as in physics one infers from macroscopic observations the properties of microscopic particles, we would like to be able to infer from the large-scale behavior of a cellular automaton something about the local rules that generate that behavior.  In particular, are there certain large-scale features that can only be produced by \emph{noncommutative} local interactions?
By now a lot is known about the computational complexity of the abelian sandpile model \cite{GM97,MN99}; see \cite{MM11} for a recent compilation. 
The requirement that a distributed network produce the same output regardless of the order in which processors act would seem to place a severe restriction on the kinds of tasks it can perform.  Yet abelian networks can perform some highly nontrivial tasks, such as solving certain linear and nonlinear integer programs (see \textsection\ref{s.MIP}).  Are there other computational tasks that \emph{require} noncommutativity?

In this paper and its sequels, by defining abelian networks and exploring their fundamental properties, we hope to take a step toward making these questions precise and eventually answering them.  
}
After giving the formal definition of an abelian network in \textsection \ref{s.definition}, we survey a number of examples in \textsection \ref{s.examples}.  These include the well-studied sandpile and rotor networks as well as two \emph{non-unary} examples: oil and water, and abelian mobile agents. In \textsection \ref{s.leastaction} we prove a least action principle for abelian networks and explore some of its consequences.  One consequence is that ``local abelianness implies global abelianness'' (Lemma~\ref{l.localglobalabelian}).  Another is that abelian networks solve optimization problems of the following form: given a nondecreasing function $F : \N^k \to \N^k$, find the coordinatewise smallest vector $\uu \in \N^k$ such that $F(\uu) \leq \uu$ (if it exists).

This paper is the first in a series of three. In the sequel \cite{part2} we give conditions for a finite abelian network to halt on all inputs.  Such a network has a natural invariant attached to it, the \emph{critical group}, whose structure we investigate in \cite{part3}.  

\section{Definition of an abelian network}
\label{s.definition}

This section begins with the formal definition of an abelian network, which is based on Deepak Dhar's model of \emph{abelian distributed processors} \cite{Dha99a, Dha99b, Dha06}.  The term ``abelian network'' is convenient when one wants to refer to a collection of communicating processors as a single entity.  Some readers may wish to look at the examples in \textsection \ref{s.examples} before reading this section in detail.  

Let $G=(V,E)$ be a directed graph, which may have self-loops and multiple edges.  Associated to each vertex $v \in V$ is a \emph{processor} $\Proc_v$, which is an automaton with a single input port and multiple output ports, one for each edge $(v,u) \in E$.  Each processor reads the letters in its input port in first-in-first-out order.  

The processor $\Proc_v$ has an input alphabet $A_v$ and state space $Q_v$. These sets will usually be finite (but see \textsection\ref{s.oilwater} for an example with infinite state space).
We will always take the sets $A_v$ for $v \in V$ to be disjoint, so that a given letter belongs to the input alphabet just one processor.  No generality is lost by imposing this condition.

The behavior of the processor $\Proc_v$ is governed by a \emph{transition function} $T_v$ and \emph{message passing functions} $T_{(v,u)}$ associated to each edge $(v,u) \in E$.  Formally, these are maps
\begin{align*}
	T_v :  A_v \times Q_v \to Q_v  && \mbox{(new internal state)} \\
	T_{(v,u)} : A_v \times Q_v \to A_u^*
	&& \mbox{(letters sent from $v$ to $u$)}
\end{align*}
where $A_u^*$ denotes the free monoid of all finite words in the alphabet $A_u$.  We interpret these functions as follows.  If the processor $\Proc_v$ is in state $q$ and processes input $a$, then two things happen:
	\begin{enumerate}
	\label{i.prospective}
	\item[(1)] Processor $\Proc_v$ transitions to state $T_v(a,q)$; and
	\item[(2)] For each edge $(v,u) \in E$, processor $\Proc_u$ receives input $T_{(v,u)}(a,q)$.
	\end{enumerate}

If more than one $\Proc_v$ has inputs to process, then changing the order in which processors act may change the order of messages arriving at other processors.  Concerning this issue,  Dhar writes that 
\begin{quote}
``In many applications, especially in computer science, one considers such networks where the speed of the individual processors is unknown, and where the final state and outputs generated should not depend on these speeds. Then it is essential to construct protocols for processing such that the final result does not depend on the order at which messages arrive at a processor.'' \cite{Dha06}
\end{quote}
\hypertarget{i.wishlist}{Therefore we ask that the following aspects of the computation \emph{do not depend on the order in which individual processors act}}:
 
\begin{enumerate}
\item[(a)] The \textbf{halting status} (i.e., whether or not processing eventually stops).
\item[(b)] The \textbf{final states} of the processors.
\item[(c)] The \textbf{run time} (total number of letters processed by all $\Proc_v$).
\item[(d)] The \textbf{local run times} (number of letters processed by a given $\Proc_v$).
\item[(e)] The \textbf{detailed local run times} (number of times a given $\Proc_v$ processes a given letter $a \in A_v$).
\end{enumerate}

A priori it is not obvious that these goals are actually achievable by any nontrivial network.  In \textsection\ref{s.leastaction} we will see, however, that a simple local commutativity condition ensures all five goals are achieved.  
To state this condition, we extend the domain of $T_v$ and $T_{(v,u)}$ to $A_v^* \times Q_v$: if $w = a w'$ is a word in alphabet $A_v$ beginning with $a$, then set $T_v(w, q) = T_v(w', T_v(a, q))$
and
$T_{(v,u)}(w,q) = T_{(v,u)}(a,q) T_{(v,u)}(w', T_v(a,q))$, where the product denotes concatenation of words. For the empty word $\eps$, we set $T_v(\eps,q)=q$ and $T_{(v,u)}(\eps,q)=\eps$.

Let $\N^A$ be the free commutative monoid generated by $A$, and write $w \mapsto |w|$ for the natural map $A^* \to \N^A$. So $|w|$ is a vector with coordinates indexed by $A$, and its coordinate $|w|_a$ is the number of letters $a$ in the word $w$. In particular, words $w,w'$ satisfy $|w|=|w'|$ if and only if $w'$ is a permutation of $w$.

\begin{definition} \label{d.abelianprocessor}
 (Abelian Processor)
The processor $\Proc_v$ is called \emph{abelian} if for any words $w,w' \in A_v^*$ such that $|w| = |w'|$, we have for all $q \in Q_v$ and all edges $(v,u) \in E$
	\[ T_v(w,q) = T_v(w',q) \qquad \mbox{ and } \qquad 
	|T_{(v,u)}(w,q)| = |T_{(v,u)}(w',q)|.\]
That is, permuting the letters input to $\Proc_v$ does not change the resulting state of the processor $\Proc_v$, and may change each output word sent to $\Proc_u$ only by permuting its letters.
\end{definition}

A simple induction shows that if Definition~\ref{d.abelianprocessor} holds for words $w,w'$ of length $2$, then it holds in general; see \cite[Lemma 2.1]{HLW16}.

\begin{definition} \label{d.abeliannetwork}
 (Abelian Network)
An \emph{abelian network} on a directed graph $G=(V,E)$ is a collection of automata $\Net = (\Proc_v)_{v\in V}$ indexed by the vertices of $G$, such that each $\Proc_v$ is abelian.
\end{definition}

We make a few remarks about the definition: \medskip

1. The definition of an abelian network is \emph{local} in the sense that it involves checking a condition on each processor individually. As we will see, these local conditions imply a ``global'' abelian property (Lemma~\ref{l.localglobalabelian}). \smallskip

2. A processor $\Proc_v$ is called \emph{unary} if its alphabet $A_v$ has cardinality~$1$.  A unary processor is trivially abelian, and any network of unary processors is an abelian network.  Most of the examples of abelian networks studied so far are actually unary networks (an exception is the block-renormalized sandpile defined in \cite{Dha99a}).  Non-unary networks represent an interesting realm for future study.
The ``oil and water model'' defined in \textsection \ref{s.oilwater} is an example of an abelian network that is not a block-renormalized unary network.


\subsection{Comparison with cellular automata}
\label{s.cellular}

Cellular automata are traditionally studied on the grid $\Z^d$ or on other lattices, but they may be defined on any directed graph~$G$.  Indeed, we would like to suggest (see \textsection\ref{s.graphalg}) that the study of cellular automata on $G$ could be a fruitful means of revealing interesting graph-theoretic properties of~$G$. 

Abelian networks may be viewed as cellular automata enjoying the following two properties. \medskip

1. \textbf{Abelian networks can update asynchronously.} 
Traditional cellular automata update in parallel: at each time step, all cells \emph{simultaneously} update their states based on the states of their neighbors.   
Since perfect simultaneity is hard to achieve in practice, the physical significance of parallel updating cellular automata is open to debate.  Abelian networks do not require the kind of central control over timing needed to enforce simultaneous parallel updates, because they reach the same final state no matter in what order the updates occur.  \medskip

2. \textbf{Abelian networks do not rely on shared memory.}  
Implicit in the update rule of cellular automata is an unspecified mechanism by which each cell is kept informed of the states of its neighbors.  The lower-level interactions needed to facilitate this exchange of information in a physical implementation are absent from the model.  Abelian networks include these interactions by operating in a ``message passing'' framework instead of the ``shared memory'' framework of cellular automata: An individual processor in an abelian network cannot access the states of neighboring processors. It can only read the messages they send.

\section{Examples}\label{s.examples}

\begin{figure}[h]
\begin{center}
\includegraphics[height=.375\textheight]{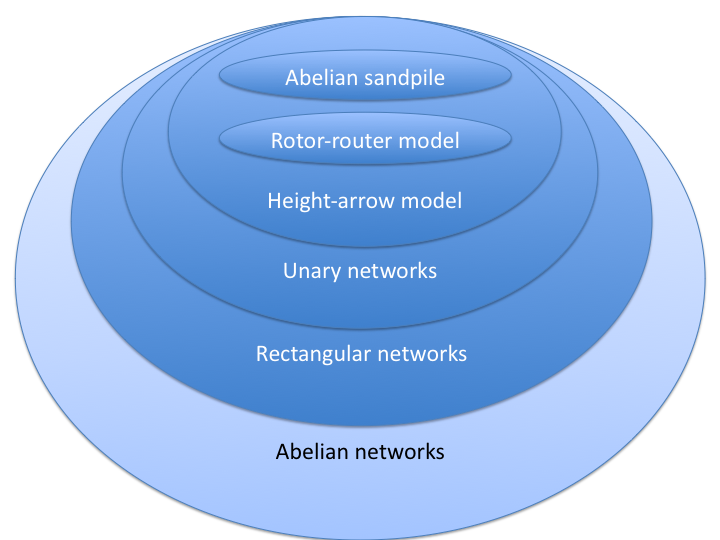}
\end{center}
\vspace{-4mm}
\caption{Venn diagram illustrating several classes of abelian networks.}
\label{fig:venn}
\end{figure}

\subsection{Sandpile networks} 
\label{s.sandpile}
Figure~\ref{fig:venn} shows increasingly general classes of abelian networks.  
The oldest and most studied is the \emph{abelian sandpile model} \cite{BTW87, Dha90}, also called \emph{chip-firing} \cite{BLS91,Big99}.  Given a directed graph $G=(V,E)$, the processor at each vertex $v \in V$ has a one-letter input alphabet $A_v = \{v\}$ (we call the letter $v$ in order to keep the alphabets of different processors disjoint)
and state space $Q_v = \{0,1,\ldots,r_v-1\}$, where $r_v$ is the outdegree of $v$. The transition function is
	\[ T_v (q) = q+1 \modulo{r_v}. \]
(Formally we should write $T_v(v,q)$, but when $\# A_v =1$ we omit the redundant first argument.) The message passing functions are
	\[ T_{(v,u)} (q) = \begin{cases} \eps, & q<r_v-1 \\
							    u, & q=r_v-1
					\end{cases} \]
for each edge $(v,u) \in E$.  Here $\epsilon \in A^*$ denotes the empty word
(and passing the message $\eps$ is equivalent to passing nothing).
Thus each time the processor at vertex $v$ transitions from state $r_v-1$ to state $0$, it sends one letter to each of its out-neighbors (Figure~\ref{f.sandpile}). When this happens we say that vertex $v$ \emph{topples} (or ``fires'').

\begin{figure}[h]
\centering
\includegraphics[width=.25\textwidth]{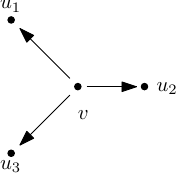} \\ \bigskip
\includegraphics[width=.6\textwidth]{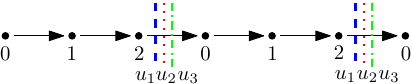} \\  \bigskip
\includegraphics[width=.6\textwidth]{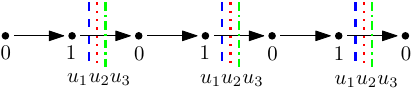}
\caption{Top: portion of graph $G$ showing a vertex $v$ and its outneighbors $u_1,u_2,u_3$.   
Middle: State diagram for $v$ in a \emph{sandpile network}. Dots represent states, arrows represent transitions when a letter is processed, and dashed vertical lines indicate when letters are sent to the neighbors.
Bottom: State diagram for the same $v$ in a \emph{toppling network} with $r_v=2$.}
\label{f.sandpile}
\end{figure}

Studies of pattern formation in sandpile networks include \cite{Ost03,DSC09,DD13}. The computational complexity of sandpile networks is investigated in \cite{GM97} (where a parallel update rule is required) and in \cite{MN99,MM11}, where the focus is on comparing the computational power of sandpile networks with underlying graph $\Z^d$ for different dimensions $d$.

\subsection{Toppling networks}
\label{s.toppling}
These have the same transition and message passing functions as the sandpile networks above, but we allow the number of states $r_v$ (called the \emph{threshold} of vertex $v$) to be different from the outdegree of $v$.  These networks can be concretely realized in terms of ``chips'': If a vertex in state $q$ has $k$ letters in its input port, then we say that there are $q+k$ chips at that vertex.  
When $v$ has at least $r_v$ chips, it can \emph{topple}, losing $r_v$ chips and sending one chip along each outgoing edge.   In a sandpile network the total number of chips is conserved, but in a toppling network, chips may be created (if $r_v$ is less than the outdegree of $v$, as in the last diagram of Figure~\ref{f.sandpile}) or destroyed (if $r_v$ is larger than the outdegree of $v$).  

Note that 
some chips are ``latent'' in the sense that they are encoded by the internal states of the processors.  For example if a vertex $v$ with $r_v=2$ is in state~$0$, receives one chip and processes it, then the letter representing that chip is gone, but the internal state increases to~$1$ representing a latent chip at $v$.  If $v$ receives another chip and processes it, then its state returns to $0$ and it topples by sending one letter to each out-neighbor.

It is convenient to specify a toppling network by its \emph{Laplacian}, which is the $V\times V$ matrix $L$ with diagonal entries $L_{vv} = r_v-d_{vv}$ and off-diagonal entries $L_{uv} = -d_{uv}$.  Here $d_{uv}$ is the number of edges from $v$ to $u$ in the graph $G$. 

Sometimes it is useful to consider toppling networks where the number of chips at a vertex may become negative \cite{threshold}.  We can model this by enlarging the state space of each processor to include $-\N$; these additional states have transition function $T_v(q) = q+1$ and send no messages.  In \textsection \ref{s.MIP} we will see that these enlarged toppling networks solve certain integer programs.

\subsection{Sinks and counters}
\label{s.sink}
It is common to consider the sandpile network $\Sand(G,s)$ with a \emph{sink} $s$, a vertex whose processor has only one state and never sends any messages.  If every vertex of $G$ has a directed path to the sink, then any finite input to $\Sand(G,s)$ will produce only finitely many topplings.  

The set of \emph{recurrent states} of a sandpile network with sink is in bijection with objects of interest in combinatorics such as oriented spanning trees and $G$-parking functions \cite{PS04}. Recurrent states of more general abelian networks are defined and studied in the sequel paper \cite{part3}.

\label{s.counter}
A \emph{counter} is a unary processor with state space $\N$ and transition $T(q) = q+1$, which never sends any messages.  It behaves like a sink, but keeps track of how many letters it has received.
					
\subsection{Bootstrap percolation}
In this simple model of crack formation, each vertex $v$ has a threshold $b_v$.
Vertex $v$ becomes ``infected'' as soon as at least $b_v$ of its in-neighbors are infected.  Infected vertices remain infected forever.  A question that has received a lot of attention  \cite{Ent87,Hol03} due to its subtle scaling behavior is: What is the probability the entire graph becomes infected, if each vertex independently starts infected with probability $p$?  To realize bootstrap percolation as an abelian network, we take 
$A_v = \{v\}$ and $Q_v = \{0,1,\ldots,b_v\}$, with 
	$T_v(q) = \min(q+1, b_v)$
and
	\[ T_{(v,u)}(q) = \begin{cases} u, & q = b_v-1 \\
						       \epsilon, & q \neq b_v-1. \end{cases} \]
\old{ 
 $A_v$ to be the set $N_{in}(v)$ of in-neighbors of $v$ and $Q_v = 2^{N_{in}(v)}$ to be its power set.  The transition and message passing functions are given by
	\[ T_v(a,q) = q \cup \{a\} \]
and
	\[ T_{(v,u)}(a,q) = \begin{cases} \epsilon,  & \# q \neq b_v \\
							   v,  & \# q = b_v.
					\end{cases} \]
} 
State $b_v$ represents that $v$ is infected. Starting from a blank slate $q_v = 0$ for all $v$, the user sets up the initial condition by inputing $b_v$ letters $v$ to each initially infected vertex $v$.  The internal state $q$ of an initially healthy processor $\Proc_v$ keeps track of how many in-neighbors of $v$ are infected.  When this count reaches $b_v$, the processor $\Proc_v$ sends a letter to each out-neighbor of $v$ informing them that $v$ is now infected.
					
\subsection{Rotor networks}
\label{s.rotor}

A \emph{rotor} is a unary processor $\Proc_v$ that outputs exactly one letter for each letter input.  That is, 
for all $q \in Q_v$
	\begin{equation} \label{e.rotor} \sum_{(v,u) \in E} \sum_{a \in A_u} |T_{(v,u)}(q)|_a = 1.   \end{equation}
Inputting a single letter into a network of rotors yields an infinite walk $(v_n)_{n \geq 0}$, where vertex $v_n$ is the location of the single letter present after $n$ processings.  This walk has been termed \emph{stack walk} \cite{HP10} because of the following equivalent description (originating in \cite{DF91}). Each vertex $v$ has an infinite stack of cards, with each card labeled by a neighbor of $v$. The walker pulls the top card from the stack at her current location, steps to the indicated neighbor, throws away the card, and repeats.  The stack perspective features prominently in Wilson's algorithm for sampling uniformly from the set of spanning trees of a finite graph \cite{Wil96}.

In the special case that each stack is periodic, the stack walk has been studied under various names: In computer science it was introduced as a model of autonomous agents exploring a territory (``ant walk,'' \cite{WLB96}) and later studied as a means of broadcasting information through a network \cite{DFS08}.  In statistical physics it was proposed as a model of self-organized criticality (``Eulerian walkers,'' \cite{PDDK96}).  Propp called this case \emph{rotor walk} and proposed it as a way of derandomizing certain features of random walk \cite{Pro03,CS06,HP10,Pro10}.

\begin{figure}[h]
\centering
\includegraphics[width=.6\textwidth]{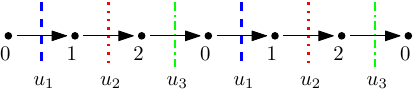}
\caption{State diagram for a vertex $v$ in a simple rotor network.  The out-neighbors $u_1,u_2,u_3$ of $v$ are served repeatedly in a fixed order.}
\label{f.rotor}
\end{figure}

Most commonly studied are the \emph{simple} rotor networks on a directed graph $G$, in which the out-neighbors of vertex $v$ are served repeatedly in a fixed order $u_1,\ldots,u_{d_v}$ (Figure~\ref{f.rotor}).  Formally, we set $Q_v = \{0,1,\ldots,d_v-1\}$,
with transition function $T_v(q) = q+1 \modnospace{d_v}$ and message passing functions
	\[ T_{(v,u_j)} (q) = \begin{cases} u_j, &q \equiv j-1 \pmod{d_v} \\
								\epsilon,  &q \not\equiv j-1 \pmod{d_v}.
					   \end{cases} \]

\silentsubsec{Rotor aggregation} 
Enlarge each state space $Q_v$ of a simple rotor network to include a transient state $-1$, which transitions to state $0$ but passes no message.  Starting with all processors in state $-1$, the effect is that each vertex ``absorbs'' the first letter it receives, and behaves like a rotor thereafter. If we input $n$ letters to one vertex $v_0$, then each letter performs a rotor walk starting from $v_0$ until reaching a site that has not yet been visited by any previous walk, where it gets absorbed.  Propp \cite{Pro03} proposed this model as a way of derandomizing a certain random growth process (internal DLA).
When the underlying graph is the square grid $\Z^2$, the resulting set of $n$ visited sites is very close to circular \cite{LP09}, and the final states of the processors display intricate patterns that are still not at all understood.

\begin{figure}[h]
\centering
\includegraphics[width=.6\textwidth]{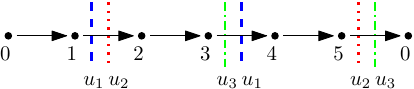} \\ \bigskip
\includegraphics[width=.6\textwidth]{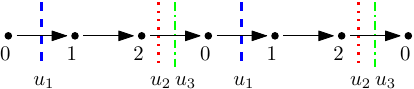}
\caption{Example state diagrams for a vertex $v$ in the height arrow model 
(top) and Eriksson's periodically mutating game (bottom).}
\label{f.unary}
\end{figure}

\subsection{Unary networks}
As shown in Figure~\ref{f.unary}, various other abelian processor state diagrams can be obtained by changing the locations of the vertical lines in Figure~\ref{f.rotor}. For example, Priezzhev, Dhar, Dhar and Krishnamurthy \cite{PDDK96} proposed a common generalization of rotor and sandpile networks, later studied by Dartois and Rossin~\cite{DR04} under the name \emph{height arrow model}. More generally, Diaconis and Fulton \cite{DF91} and Eriksson~\cite{Eri96} studied generalizations of chip-firing in which each vertex has a stack of instructions: When a vertex accumulates enough chips to follow the top instruction in its stack, it pops that instruction off the stack and follows it.  
These and all preceding examples are \emph{unary networks}, that is, abelian networks in which each alphabet $A_v$ has cardinality~$1$. Informally, a unary network on a graph $G$ is a system of local rules by which \emph{indistinguishable} chips move around on the vertices of $G$. 

Figures \ref{f.sandpile}--\ref{f.unary} are all one-dimensional because they diagram unary processors. In general, a processor with input alphabet $\{a_1,\ldots,a_d\}$ has a $d$-dimensional state diagram: states correspond to vectors in $\N^d$, and processing letter $a_i$ results in a transition from state $q$ to state $q+e_i$ where $e_1,\ldots,e_d$ are the standard basis vectors. The vertical bars that indicate message passing in Figures \ref{f.sandpile}--\ref{f.unary} become $(d-1)$-dimensional plaquettes, each labeled by a letter to be passed. 
A visual manifestation of the abelian property is that these plaquettes join up into surfaces of ``negative slope'': For example, beginning at the left side of Figure~\ref{f.agents} each red or blue message line ($d-1=1$) consists of only downward and rightward steps, ensuring that for any two states $q,q'$ any two paths of upward and rightward steps from $q$ to $q'$ cross the same set of message lines.

The next two sections discuss non-unary examples.

\subsection{Abelian mobile agents}
\label{s.mobile}

In the spirit of \cite{WLB96}, one could replace the messages in our definition of abelian networks by mobile agents each of which is an automaton. As a function of its own internal state $a$ and the state $q$ of the vertex~$v$ it currently occupies, an agent acts by doing three things:
	\begin{enumerate} 
	\item it changes its own state to $S_v(a,q)$; and
	\item it changes the state of~$v$ to $T_v(a,q)$; and
	\item it moves to a neighboring vertex $U_v(a,q)$.
	\end{enumerate}
Two or more agents may occupy the same vertex, in which case we require that the outcome of their actions is the same regardless of the order in which they act.  For purposes of deciding whether two outcomes are the same, we regard agents with the same internal state and location as indistinguishable.  

This model may appear to lie outside our framework of abelian networks, because the computation is located in the moving agents (who carry their internal states with them) instead of in the static processors.  However, it has identical behavior to the following abelian network. Denoting by $M$ the set of possible agent internal states (we could call them ``moods'' to distinguish them from the internal states of the vertices), let each vertex have input alphabet $M$ (technically, we should take the input alphabet of $v$ to be $\{v\} \times M$ to abide by our convention that the input alphabets are disjoint) with transition function $M \times Q_v \to Q_v$ sending $(a,q) \mapsto T_v(a,q)$, and message passing function $M \times Q_v \to M \cup \{\eps\}$ given by
	\[ T_{(v,u)}(a,q) = \begin{cases} S_v(a,q) & \text{if } u = U_v(a,q) \\
							\eps & \text{else.}  \end{cases} \]

Abelian mobile agents generalize the rotor networks (\textsection \ref{s.rotor}) by dropping the requirement that processors be unary.

\begin{figure}[h]
\begin{center}
\includegraphics[width=0.6\textwidth]{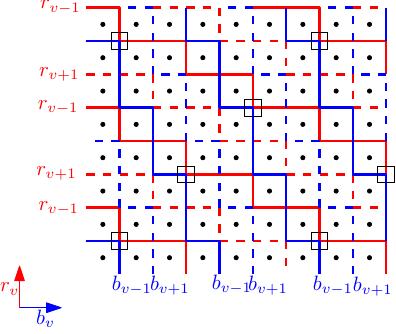}
\end{center}
\caption{Abelian mobile agents: Example state diagram for a processor $\Proc_v$ in a network whose underlying graph is $\Z$. The two dimensions correspond to the two letters in the input alphabet $A_v = \{r_v,b_v\}$, representing a red or blue agent at vertex $v$. Each black dot represents a state $q \in Q_v$. When processor $\Proc_v$ in state $q$ processes a letter, it transitions to state $q + (0,1)$ or $q+(1,0)$ depending on whether the letter was $r_v$ or $b_v$. The solid and dashed colored lines indicate message passing: Each line is labeled by one of the letters $r_{v-1}, r_{v+1}, b_{v-1}, b_{v+1}$, representing that the agent may step either left or right from $v$ and may change color.
 The small black boxes highlight the lattice of periodicity, generated by $(6,0)$ and $(2,2)$. The size of the state space $\# Q_v$ is the index of the lattice, which is $12$ in this example.
 }
\label{f.agents}
\end{figure} 

The defining property of abelian mobile agents is that each processor sends exactly one letter for each letter received.  In Figure~\ref{f.agents} this property is apparent from the fact that each segment of the square grid lies on exactly one message line.  The caption is written from the processor's point of view. From the agent's point of view, it could read as follows. When an agent arrives at vertex $v$, she updates the internal state $q$ of $\Proc_v$ depending on her mood: if her mood is red then she increments $q$ by $(0,1)$ and if blue then she increments $q$ by $(1,0)$. The old and new states are adjacent black dots in the figure, separated by exactly one message line. The agent updates her mood to red or blue according to the color of this line, and she moves to vertex $v-1$ or $v+1$ according to whether this line is solid or dashed. 

For example, supposing the initial state of $\Proc_v$ is $(0,0)$ (the bottom left dot) and there is one red and one blue agent at $v$. This means $\Proc_v$ has two letters in its input port, $r_v$ and $b_v$.  If the blue agent acts first, then $\Proc_v$ transitions to state $(1,0)$ and outputs $b_{v-1}$, representing that the blue agent steps to $v-1$ and remains blue. If now the red agent at $v$ acts, then $\Proc_v$ transitions from state $(1,0)$ to $(1,1)$ and outputs $r_{v-1}$, representing that the red agent steps to $v-1$ and remains red.  Note that if the agents had acted in the opposite order, then both would have changed color, so the net result is the same: one red and one blue agent at $v-1$.
	
\subsection{Oil and water model}
\label{s.oilwater}

This is a non-unary generalization of sandpiles, inspired by Paul Tseng's asynchronous algorithm for solving certain linear programs \cite{Tse90}. 
Each edge of $G$ is marked either as an oil edge or a water edge.
When a vertex topples, it sends out one oil chip along each outgoing oil edge and also one water chip along each outgoing water edge. The interaction between oil and water is that a vertex is permitted to topple if and only if sufficiently many chips of \emph{both} types are present at that vertex: that is, the number of oil chips present must be at least the number of outgoing oil edges, and the number of water chips present must be at least the number of outgoing water edges.  

\begin{figure}[h]
\centering
\includegraphics[width=.27\textwidth]{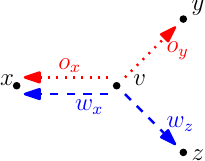} \\ \bigskip
\includegraphics[width=.65\textwidth]{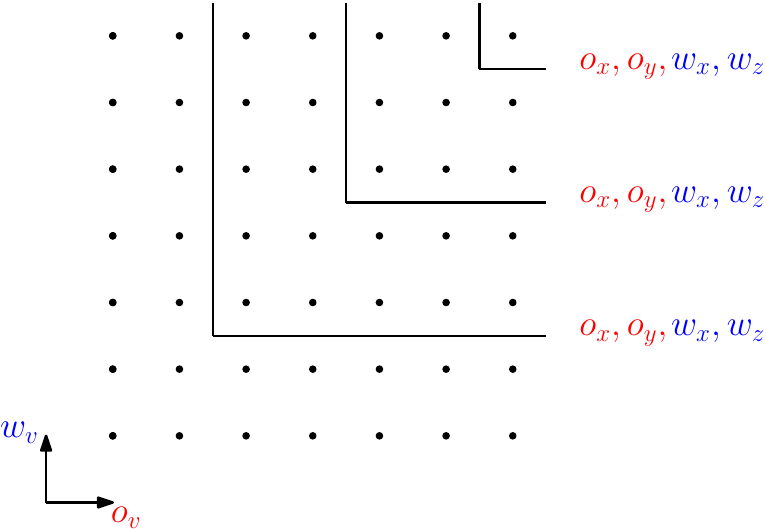}
\caption{Example state diagram for the oil and water model. Top: Vertex $v$ has outgoing oil edges to $x$ and $y$, and water edges to $x$ and $z$. Bottom: each dot represents a state in $Q_v = \N \times \N$, with the origin at lower left.  A toppling occurs each time the state transition crosses one of the bent lines (for example, by processing an oil $o_v$ in state $(1,2)$, resulting in transition to state $(2,2)$). Since $v$ has outdegree $2$ in both the oil graph and the water graph, the bent lines run to the left of columns whose $x$-coordinate is divisible by $2$, and below rows whose $y$-coordinate is divisible by $2$.}
\label{f.oilwater}
\end{figure}

Unlike most of the preceding examples, oil and water can not be realized with a finite state space $Q_v$, because an arbitrary number of oil chips could accumulate at $v$ and be unable to topple if no water chips are present.  We set  $Q_v = \N \times \N$ and $A_v = \{o_v, w_v\}$ representing an oil or water chip at vertex $v$, with transition function
  	\[ T_v(o_v,q) = q+(0,1), \qquad T_v(w_v,q) = q+(1,0). \]
The internal state of the processor at $v$ is a vector $q =(q_{oil}, q_{water})$ keeping track of the total number chips of each type it has received (Figure~\ref{f.oilwater}).  Stochastic versions of the oil and water model are studied in \cite{APR09, oil}.

\subsection{Stochastic abelian networks}

In a stochastic abelian network, we allow the transition functions 
to depend on a probability space $\Omega$:
\begin{align*}
	T_v :  A_v \times Q_v \times \Omega \to Q_v  && \mbox{(new internal state)} \\
	T_{(v,u)} : A_v \times Q_v \ghost{\;\times\, \Omega} \to A_u^*
	&& \mbox{(letters sent from $v$ to $u$)}
\end{align*}
A variety of models in statistical mechanics --- including classical Markov chains and branching processes, branching random walk, certain directed edge-reinforced walks, internal DLA \cite{DF91}, the Oslo model \cite{Fre93}, the abelian Manna model \cite{Dha99c}, excited walk \cite{BW03}, the Kesten-Sidoravicius infection model \cite{KS05, KS08}, two-component sandpiles and related models derived from abelian algebras \cite{AR08, APR09}, activated random walkers \cite{DRS10}, stochastic sandpiles \cite{RS12,CMS13}, and low-discrepancy random stack \cite{FL13} --- can all be realized as stochastic abelian networks.  In at least one case \cite{RS12} the abelian nature of the model enabled a major breakthrough in proving the existence of a phase transition.  Stochastic abelian networks are beyond the scope of the present paper and will be treated in a sequel.

\section{Least action principle}
\label{s.leastaction}

Our first aim is to prove a least action principle for abelian networks, Lemma~\ref{l.leastaction}. This principle says --- in a sense to be made precise --- that each processor in an abelian network performs the minimum amount of work possible to remove all letters from the network.  Various special cases of the least action principle to particular abelian networks have enabled a flurry of recent progress: bounds on the growth rate of sandpiles \cite{FLP10}, exact shape theorems for rotor aggregation \cite{KL10,HS11}, proof of a phase transition for activated random walkers \cite{RS12}, and a fast simulation algorithm for growth models \cite{FL13}.  The least action principle was also the starting point for the recent breakthrough by Pegden and Smart \cite{PS13} showing existence of the abelian sandpile scaling limit.

The proof of the least action principle follows Diaconis and Fulton \cite[Theorem~4.1]{DF91}.  Our observation  is that their proof actually shows something more general: it applies to any abelian network.  Moreover, as noted in \cite{Gab94,FLP10,RS12}, the proof applies even to executions that are complete but not legal.  To explain the last point requires a few definitions.

Let $\Net$ be an abelian network with underlying graph $G=(V,E)$, total state space~$Q = \prod Q_v$ and total alphabet~$A = \sqcup A_v$. In this section we do not place any finiteness restrictions on $\Net$: the underlying graph may be finite or infinite, and the state space $Q_v$ and alphabet $A_v$ of each processor may be finite or infinite.  

We may view the entire network $\Net$ as a single automaton with alphabet $A$ and state space $\Z^A \times Q$.  For its states we will use the notation $\xx.\qq$, where $\xx \in \Z^A$ and $\qq \in Q$.  If $\xx \in \N^A$ the state $\xx.\qq$ corresponds to the configuration of the network $\Net$ such that
	\begin{itemize} 
	\item For each $a \in A$, there are $\xx_a$ letters $a$ waiting to be processed; and
	\item For each $v \in V$, the processor at vertex $v$ is in state $q_v$.
	\end{itemize}
Formally, $\xx.\qq$ is just an alternative notation for the ordered pair $(\xx,\qq)$.  
The decimal point in $\xx.\qq$ is intended to evoke the intuition that the internal states $\qq$ of the processors represent latent ``fractional'' messages. Note that $\xx$ indicates only the \emph{number} of letters present of each type. We may think of $\xx$ as a collection of piles of letters, one pile for each vertex: Recalling that the alphabets $A_v$ are disjoint, the $\xx_a$ letters $a$ are in the pile of the unique vertex $v$ such that $a \in A_v$.  
\old{
Allowing $\xx$ to have negative coordinates is a useful device that enables the least action principle (Lemma~\ref{l.leastaction} below).
}

In what follows it may be helpful to imagine that some entity, the \emph{executor},
chooses the order in which letters are processed. Formally, these choices are encoded by a word $w = w_1 \cdots w_r$ where each letter $w_i \in A$, instructing the network first to process letter $w_1$, then $w_2$, etc.
We are going to allow for the possibility that the executor makes an ``illegal'' move by
choosing to process some letter (say $a$) even if it is not in the pile, resulting in the coordinate $\xx_a$ becoming negative. 

For $v \in V$ and $a \in A_v$, denote by $t_a : Q \to Q$ the map
	\[ t_a(\qq)_u = \begin{cases} T_v(a,q_v), & u=v \\
						    q_u, & u \neq v \end{cases} \]
where $T_v$ is the transition function of vertex $v$ (defined in \textsection\ref{s.definition}).
The effect of processing one letter $a$ on the pair $\xx.\qq$ is described by a  map $\pi_a : \Z^A \times Q \to \Z^A \times Q$, namely
	\begin{equation} \label{e.onepass} \pi_a(\xx.\qq) = (\xx - \basis_a + \mathbf{N}(a,q_v)).t_a(\qq) \end{equation}
where $(\basis_{a})_b$ is $1$ if $a=b$ and $0$ otherwise; and 
$\mathbf{N}(a,q_v)_b$ is the number of $b$'s produced when processor $\Proc_v$ in state $q_v$ processes the letter $a$.  
In other words,
	\[ \mathbf{N}(a,q_v) = \sum_{e} |T_e(a,q_v)| \] 
where $T_e$ is the message passing function of edge $e$, and the sum is over all outgoing edges $e$ from $v$ (both sides are vectors in $\Z^A$).

Having defined $\pi_a$ for letters $a$, we define $\pi_w$ for a word $w = w_1\cdots w_r \in A^*$ as the composition $\pi_{w_r} \circ \cdots \circ \pi_{w_1}$. 
To generalize equation \eqref{e.onepass}, we extend the domain of $\mathbf{N}$ to $A^* \times Q$ as follows.  Let $\qq^{i-1} =  (t_{w_{i-1}} \circ \cdots \circ t_{w_{1}})\qq$ and let
	\[ \mathbf{N}(w,\qq) := \sum_{i=1}^r \mathbf{N}(w_i, \qq^{i-1}_{v(i)}) \]
where $v(i)$ is the unique vertex such that $w_i \in A_{v(i)}$.  Note that if $a\in A_v$ and $b \in A_u$ for $v\neq u$, then 
	\begin{equation} \label{e.locality} \mathbf{N}(ab,\qq) = \mathbf{N}(a,\qq)+\mathbf{N}(b,\qq) \end{equation}
since $t_a$ acts by identity on $Q_u$ and $t_b$ acts by identity on $Q_v$.

Recall that $|w| \in \N^A$ and $|w|_a$ is the number of occurrences of letter $a$ in the word $w$.  
From the definition of $\pi_a$ we have by induction on $r$
	\begin{equation} \label{e.piformula} \pi_w(\xx.\qq) = (\xx - |w| + \mathbf{N}(w,\qq)).t_w(\qq) \end{equation}
where $t_w := t_{w_r} \circ \cdots \circ t_{w_1}$.

In the next lemma and throughout this paper, inequalities on vectors are coordinatewise.

\begin{lemma} 
\label{l.monotonicity}
\moniker{Monotonicity}
For $w,w' \in A^*$ and $\qq \in Q$, if $|w| \leq |w'|$, then $\mathbf{N}(w,\qq) \leq \mathbf{N}(w',\qq)$.
\end{lemma}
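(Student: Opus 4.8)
The plan is to deduce the inequality from two facts: first, that $\mathbf{N}(w,\qq)$ and the state map $t_w$ depend on $w$ only through its commutative image $|w|$; and second, that appending letters can only increase $\mathbf{N}$. Granting these, the conclusion is immediate. Writing $|w'| = |w| + \mm$ with $\mm \in \N^A$ and choosing any word $w''$ with $|w''| = \mm$, order-independence gives $\mathbf{N}(w',\qq) = \mathbf{N}(ww'',\qq)$, since $|ww''| = |w'|$. The concatenation identity $\mathbf{N}(ww'',\qq) = \mathbf{N}(w,\qq) + \mathbf{N}(w'', t_w\qq)$, together with the fact that $\mathbf{N}(w'',\cdot)$ is a sum of output-letter counts and hence a vector in $\N^A$, then yields $\mathbf{N}(w',\qq) \geq \mathbf{N}(w,\qq)$.

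First I would record the concatenation identities $\mathbf{N}(uv,\qq) = \mathbf{N}(u,\qq) + \mathbf{N}(v, t_u\qq)$ and $t_{uv} = t_v \circ t_u$. Both follow directly from the definitions by splitting the defining sum for $\mathbf{N}(\cdot,\qq)$ at the boundary between $u$ and $v$ and tracking the intermediate states $\qq^{i-1}$ used there.

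The core step is the order-independence, and this is where the abelian hypothesis enters. It suffices to prove that $\mathbf{N}(w,\qq)$ and $t_w\qq$ are invariant under swapping two adjacent letters of $w$, since adjacent transpositions generate every rearrangement of $w$. So I would take $w = x\,ab\,y$ and $w' = x\,ba\,y$, set $\qq' = t_x\qq$, and reduce by the concatenation identity to the two local equalities $\mathbf{N}(ab,\qq') = \mathbf{N}(ba,\qq')$ and $t_{ab}\qq' = t_{ba}\qq'$. If $a$ and $b$ lie in different alphabets, then $t_a$ and $t_b$ alter different coordinates of $Q$ and hence commute, giving $t_{ab} = t_{ba}$, while \eqref{e.locality} gives $\mathbf{N}(ab,\qq') = \mathbf{N}(a,\qq') + \mathbf{N}(b,\qq') = \mathbf{N}(ba,\qq')$. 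If instead $a,b$ both lie in a common alphabet $A_v$, I would first verify by induction on word length (using the word-extensions of $T_v$ and of the edge functions $T_e$) that $\mathbf{N}(w,\qq) = \sum_e |T_e(w,q_v)|$ and that $t_w\qq$ alters only the $v$-coordinate, sending it to $T_v(w,q_v)$, for every $w \in A_v^*$. The two local equalities then become precisely the abelian property of Definition~\ref{d.abelianprocessor} applied to $\Proc_v$ and the pair $ab$, $ba$, which have equal commutative image.

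The main obstacle is the bookkeeping in this last step: the quantities $\mathbf{N}$ and $t$ are defined jointly and feed into each other, so their order-independence must be established simultaneously, and one must correctly translate the per-edge statement of Definition~\ref{d.abelianprocessor} into the aggregated quantity $\mathbf{N} = \sum_e |T_e|$. Once that is in place, both the concatenation identities and the monotonicity conclusion are routine.
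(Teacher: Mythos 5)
Your proof is correct, and its skeleton matches the paper's: pad $w$ by a word $w''$ so that $|ww''|=|w'|$, use invariance of $\mathbf{N}$ under reordering to replace $w'$ by $ww''$, and conclude by splitting off the nonnegative contribution of $w''$. Where you diverge is in how the reordering-invariance is justified. The paper first reduces to words in a single $A_v^*$ via the projections $p_v$ and the locality identity \eqref{e.locality}, and then needs only one application of Definition~\ref{d.abelianprocessor}: since $|w'|=|ww''|$ as words in $A_v^*$, the abelian property gives $|T_{(v,u)}(w',q_v)|=|T_{(v,u)}(ww'',q_v)|$ directly, with no transposition argument. You instead work in the full alphabet $A^*$ and establish that $\mathbf{N}(\cdot,\qq)$ and $t_{(\cdot)}\qq$ depend only on the commutative image by adjacent transpositions, handling cross-vertex swaps by locality and same-vertex swaps by the abelian property. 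Your route is longer but proves more along the way: your order-independence claim is exactly Lemma~\ref{l.piscommute}, which the paper instead \emph{deduces from} monotonicity, so you have reversed the logical order of the two lemmas. Both arguments are sound; the paper's is more economical, while yours delivers the commutativity of the $\pi_a$'s as a free byproduct and makes explicit the bookkeeping identity $\mathbf{N}(w,\qq)=\sum_e |T_e(w,q_v)|$ for $w\in A_v^*$, which the paper uses implicitly.
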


\begin{proof}
For a vertex $v \in V$ let $p_v : A^* \to A_v^*$ be the monoid homomorphism defined by $p_v(a) = a$ for $a \in A_v$ and $p_v(a) = \epsilon$ (the empty word) for $a \notin A_v$.
Equation \eqref{e.locality} implies that
	\[ \mathbf{N}(w,\qq) = \sum_{v \in V} \mathbf{N}(p_v(w),\qq), \]
so it suffices to prove the lemma for $w,w' \in A_v^*$ for each $v \in V$.

Fix $v \in V$ and $w,w' \in A_v^*$ with $|w| \leq |w'|$. Then there is a word $w''$ such that $|ww''| = |w'|$.  Given a letter $a \in A_u$, if $(v,u) \notin E$ then $\mathbf{N}(w,\qq)_a = \mathbf{N}(w',\qq)_a = 0$.  If $(v,u) \in E$, then since $\Proc_v$ is an abelian processor,
	\begin{align*} \mathbf{N}(w',\qq)_a = |T_{(v,u)}(w',q_v)|_a &= |T_{(v,u)}(ww'',q_v)|_a 
		\\ &= |T_{(v,u)}(w,q_v)|_a + |T_{(v,u)}(w'',T_v(w,q_v))|_a.
	\end{align*}
The first term on the right side equals $\mathbf{N}(w,\qq)_a$, and the remaining term is nonnegative, completing the proof.
\end{proof}

\begin{lemma}
\label{l.piscommute}
For $w,w' \in A^*$, if $|w|_a = |w'|_a$ for all $a \in A$, then $\pi_w = \pi_{w'}$.
\end{lemma}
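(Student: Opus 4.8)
The plan is to read everything off the closed form \eqref{e.piformula}. Writing
\[ \pi_w(\xx.\qq) = (\xx - |w| + \mathbf{N}(w,\qq)).t_w(\qq) \]
and comparing with the same formula for $w'$, the equality $\pi_w = \pi_{w'}$ of maps on $\Z^A \times Q$ amounts to matching both coordinates for every $\xx.\qq$. The first (the $\Z^A$-component) reads $\xx - |w| + \mathbf{N}(w,\qq) = \xx - |w'| + \mathbf{N}(w',\qq)$, which under the hypothesis $|w| = |w'|$ collapses to $\mathbf{N}(w,\qq) = \mathbf{N}(w',\qq)$; the second is $t_w(\qq) = t_{w'}(\qq)$. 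So the lemma reduces to the two pointwise identities
\[ t_w(\qq) = t_{w'}(\qq) \quad\text{and}\quad \mathbf{N}(w,\qq) = \mathbf{N}(w',\qq) \qquad \text{for all } \qq \in Q, \]
and the strategy for each is the same: localize at a single vertex and invoke the abelian property (Definition~\ref{d.abelianprocessor}).

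For the second identity I would reuse the decomposition from the proof of Lemma~\ref{l.monotonicity}: by \eqref{e.locality}, $\mathbf{N}(w,\qq) = \sum_{v \in V} \mathbf{N}(p_v(w),\qq)$, where $p_v : A^* \to A_v^*$ is projection onto the letters of $A_v$. Since $|w| = |w'|$ forces $|p_v(w)| = |p_v(w')|$ for every $v$, it suffices to treat $w,w' \in A_v^*$ for a fixed $v$. For such words, unwinding the extended definition of $T_{(v,u)}$ gives $\mathbf{N}(w,\qq)_a = |T_{(v,u)}(w,q_v)|_a$ for each $a \in A_u$ with $(v,u) \in E$ (and both sides vanish when $(v,u) \notin E$); this is exactly the telescoping already performed in Lemma~\ref{l.monotonicity}. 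The abelian condition $|T_{(v,u)}(w,q_v)| = |T_{(v,u)}(w',q_v)|$ then finishes this case.

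For the first identity I would argue that $t_w$ acts coordinatewise and independently on the factors of $Q = \prod_{v} Q_v$. Because $t_a$ changes only the coordinate $v(a)$ and is the identity elsewhere, the maps $t_a$ and $t_b$ commute whenever $a,b$ lie in alphabets of distinct vertices; hence on the $v$-coordinate only the subword $p_v(w)$ acts, and $t_w(\qq)_v = T_v(p_v(w),q_v)$ in the extended left-to-right sense. Since $|p_v(w)| = |p_v(w')|$, the abelian identity $T_v(p_v(w),q_v) = T_v(p_v(w'),q_v)$ yields $t_w(\qq)_v = t_{w'}(\qq)_v$ for every $v$, hence $t_w(\qq) = t_{w'}(\qq)$.

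The main thing to get right is the bookkeeping in these two reductions: verifying that the extended word-functions $T_v$ and $T_{(v,u)}$ on $A_v^*$ genuinely describe the $v$-coordinate of $t_w$ and the $v$-contribution to $\mathbf{N}(w,\cdot)$, consistently with the left-to-right convention built into \eqref{e.onepass} and into the definition of $\mathbf{N}$. Once these identifications are established, the abelian hypothesis applies verbatim and no further work is needed; in particular, beyond the induction already embedded in \eqref{e.piformula}, there is no additional obstacle.
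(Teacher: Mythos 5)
Your proof is correct and follows essentially the same route as the paper's: both reduce the claim via \eqref{e.piformula} to the two pointwise identities $\mathbf{N}(w,\qq)=\mathbf{N}(w',\qq)$ and $t_w(\qq)=t_{w'}(\qq)$. The only difference is bookkeeping --- the paper obtains the first identity by citing Lemma~\ref{l.monotonicity} in both directions and the second from the blanket assertion that all $t_a,t_b$ commute, whereas you localize each identity at a single vertex and apply Definition~\ref{d.abelianprocessor} directly, which is if anything slightly more explicit.
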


\begin{proof}
Suppose $|w|=|w'|$.  Then for any $\qq \in Q$ we have $\mathbf{N}(w,\qq)=\mathbf{N}(w',\qq)$ by Lemma~\ref{l.monotonicity}.
Since $t_a$ and $t_b$ commute for all $a,b \in A$, we have $t_w(\qq)=t_{w'}(\qq)$.  Hence the right side of \eqref{e.piformula} is unchanged by substituting $w'$ for $w$.
\end{proof}

\subsection{Legal and complete executions}

An \emph{execution} is a word $w \in A^*$.  It prescribes an order in which letters in the network are to be processed.  For simplicity, we consider only finite executions in the present paper, but we remark that infinite executions (and non-sequential execution procedures) are also of interest \cite{FMR09}.

Fix an initial state $\xx.\qq$. The letter $a \in A$ is called a \emph{legal move} from $\xx.\qq$ if $\xx_a \geq 1$.  An execution $w = w_1 \cdots w_r$ is called \emph{legal} for $\xx.\qq$ if $w_{i}$ is a legal move from $\pi_{w_1\cdots w_{i-1}}(\xx.\qq)$ for all $i=1,\ldots,r$. An execution $w$ is called \emph{complete} for $\xx.\qq$ if $\pi_w(\xx.\qq)=\yy.\qq'$ for some $\qq' \in Q$ and $\yy \in \Z^A$ with $\yy_a \leq 0$ for all $a \in A$. We emphasize that a complete execution need not be legal.

\begin{lemma} \label{l.leastaction}
\moniker{Least Action Principle}
If $w$ is legal for $\xx.\qq$ and $w'$ is complete for $\xx.\qq$, then $|w|_a \leq |w'|_a$ for all $a \in A$.
\end{lemma}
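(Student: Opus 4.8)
The plan is to prove the coordinatewise inequality $|w| \le |w'|$ first; the length bound $r \le s$ then follows immediately, since $r = \sum_{a \in A} |w|_a$ and $s = \sum_{a \in A} |w'|_a$, so summing the coordinatewise inequality over all $a$ gives $r \le s$. Thus everything reduces to comparing letter counts, and the real content lies in the inequality $|w| \le |w'|$.

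To establish $|w| \le |w'|$, I would argue by contradiction, pinpointing the first moment the legal execution ``overshoots'' the complete one. Suppose $|w|_a > |w'|_a$ for some letter $a$. Since processing a letter increases its running count by exactly one, there is a smallest index $k$ for which the prefix satisfies $|w_1 \cdots w_k|_{w_k} = |w'|_{w_k} + 1$; write $a := w_k$ and $u := w_1 \cdots w_{k-1}$. The key observation is that $u$ is dominated by $w'$ in every coordinate, $|u| \le |w'|$: for the letter $a$ we have $|u|_a = |w'|_a$ by the choice of $k$, while for any other letter $b$ the minimality of $k$ together with the unit-increment property forces $|u|_b \le |w'|_b$ (a running count can only reach $|w'|_b + 1$ by first passing through it at an earlier step, which the minimality of $k$ forbids). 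Isolating this single ``critical step'' and verifying the bound on $u$ is the crux of the whole argument.

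With $u$ in hand, I would combine legality, completeness, and monotonicity. Because $w_k = a$ is a legal move from $\xx^{k-1}.\qq^{k-1} = \pi_u(\xx.\qq)$, we have $\xx^{k-1}_a \ge 1$; expanding via \eqref{e.piformula} and using $|u|_a = |w'|_a$ gives
\[ \xx_a - |w'|_a + \mathbf{N}(u,\qq)_a \ge 1 > 0. \]
On the other hand, completeness of $w'$ says $\xx^s_a \le 0$, which by \eqref{e.piformula} reads $\xx_a - |w'|_a + \mathbf{N}(w',\qq)_a \le 0$. Combining these two inequalities yields $\mathbf{N}(u,\qq)_a > \mathbf{N}(w',\qq)_a$. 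But $|u| \le |w'|$, so the Monotonicity Lemma (Lemma~\ref{l.monotonicity}) gives $\mathbf{N}(u,\qq)_a \le \mathbf{N}(w',\qq)_a$, a contradiction. Hence $|w| \le |w'|$.

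I expect the main obstacle to be the bookkeeping in the second paragraph: correctly defining the critical index $k$ and verifying that the preceding prefix $u$ is dominated by $w'$ in every coordinate, which is precisely the place where legality is localized to one decisive move. Once that combinatorial fact is nailed down, the final paragraph is a short computation in which Lemma~\ref{l.monotonicity} delivers the contradiction essentially for free, and the inequality $r \le s$ is just a sum over coordinates.
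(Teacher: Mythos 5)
Your proposal is correct and follows essentially the same argument as the paper: reduce to the coordinatewise inequality, isolate the first step at which a prefix of $w$ exceeds $|w'|$ in the coordinate of the letter just processed, and then play legality at that step against completeness of $w'$ via \eqref{e.piformula} and Lemma~\ref{l.monotonicity}. The only cosmetic difference is that you phrase the final contradiction as two opposing inequalities on $\mathbf{N}(\cdot,\qq)_a$ rather than as a single chain, which changes nothing.
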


\begin{proof}
Let $w=w_1\cdots w_r$.
Supposing for a contradiction that $|w| \not\leq |w'|$, let $i$ be the smallest index such that $|w_1 \cdots w_i| \not \leq |w'|$.  Let $u = w_1 \cdots w_{i-1}$ and $a = w_i$.  By the choice of $i$ we have $|u|_a = |w'|_a$, and $|u|_b \leq |w'|_b$ for all $b \neq a$.  Since $w$ is legal for $\xx.\qq$, at least one letter $a$ is present in $\pi_\uu (\xx.\qq)$, so by \eqref{e.piformula} and Lemma~\ref{l.monotonicity}
	\begin{align*} 1 		&\leq \xx_a - |u|_a + \mathbf{N}(u,\qq)_a \\
					  &\leq \xx_a - |w'|_a + \mathbf{N}(w',\qq)_a.
	 \end{align*}
Since $w'$ is complete for $\xx.\qq$, the right side is $\leq 0$ by \eqref{e.piformula}, which yields the required contradiction.
\end{proof}

\subsection{Halting dichotomy}

\begin{lemma} \label{l.dichotomy}
\moniker{Halting Dichotomy}
For a given initial state $\qq$ and input $\xx$ to an abelian network $\Net$, either
\begin{enumerate}
\item There does not exist a finite complete execution for $\xx.\qq$; or
\item Every legal execution for $\xx.\qq$ is finite, there exists a complete legal execution for $\xx.\qq$, and any two complete legal executions $w,w'$ for $\xx.\qq$ satisfy $|w|=|w'|$. 
\end{enumerate}
\end{lemma}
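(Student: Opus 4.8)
The plan is to split on whether a finite complete execution exists, and to extract both assertions of alternative (2) from two applications of the Least Action Principle (Lemma~\ref{l.leastaction}).

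If there is no finite complete execution for $\xx.\qq$, then alternative (1) holds and there is nothing more to prove. So I would assume some finite complete execution $w^*$ exists, say of total length $s$, and work to establish alternative (2). For the finiteness assertion, fix an arbitrary legal execution $w$ and compare it to $w^*$: since $w$ is legal and $w^*$ is complete, Lemma~\ref{l.leastaction} gives $|w| \le |w^*|$, and in particular the total length of $w$ is at most $s$. Thus $s$ is a uniform upper bound on the length of \emph{every} legal execution, so legal processing must terminate after finitely many moves. (Were infinite executions admitted, this same bound would forbid any infinite legal one, since each of its finite prefixes is again legal and hence of length $\le s$.) Note that this step uses only that $w^*$ is complete, not that it is legal, which is exactly the generality Lemma~\ref{l.leastaction} provides.

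For the equality of content, let $w$ and $w'$ both be complete and legal. Applying Lemma~\ref{l.leastaction} with $w$ playing the role of the legal execution and $w'$ the complete one yields $|w| \le |w'|$; exchanging the two roles, which is legitimate precisely because each word is simultaneously legal and complete, yields $|w'| \le |w|$. Hence $|w| = |w'|$, as required, and alternative (2) holds.

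The one step I would be most careful about is the passage in the first case from a uniform length bound to the word ``finite.'' Here one must observe that legality is determined move-by-move and is therefore inherited by every prefix, so that a hypothetical arbitrarily long (or infinite) legal execution would furnish legal prefixes of unbounded length, contradicting the bound $s$ supplied by the Least Action Principle. Everything else is a direct double invocation of Lemma~\ref{l.leastaction}, exploiting that it requires only the first execution to be legal and only the second to be complete.
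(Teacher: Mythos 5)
Your proposal is correct and follows essentially the same route as the paper: bound every legal execution's length by that of the given finite complete execution via the Least Action Principle, and apply the Least Action Principle twice with the roles of $w$ and $w'$ swapped to get $|w| \leq |w'| \leq |w|$. Your extra remark about legality passing to prefixes is a sound (if not strictly needed under the paper's convention of finite executions) elaboration of the same argument.
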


\begin{proof}
If there exists a finite complete execution, say of length $s$, then every legal execution has length $\leq s$ by Lemma~\ref{l.leastaction}.  The empty word is a legal execution, and any legal execution of maximal length is complete (else it could be extended by a legal move).
 If $w$ and $w'$ are complete legal executions, then $|w| \leq |w'| \leq |w|$ by Lemma~\ref{l.leastaction}.
\end{proof}

Note that in case (1) any finite legal execution $w$ can be extended by a legal move: since $w$ is not complete, there is some letter $a\in A$ such that $wa$ is legal.  So in this case there is an infinite word $a_1 a_2 \cdots$ such that $a_1 \cdots a_n$ is a legal execution for all $n \geq 1$. The \emph{halting problem} for abelian networks asks, given $\Net$, $\xx$ and $\qq$, whether (1) or (2) of Lemma~\ref{l.dichotomy} is the case.  In case (2) we say that $\Net$ \emph{halts} on input $\xx.\qq$.  In the sequel \cite{part2} we characterize the finite abelian networks that halt on all inputs.

\subsection{Global abelianness}

\begin{definition}
\label{d.odometer}
\dmoniker{Odometer}
If $\Net$ halts on input $\xx.\qq$, we denote by $[\xx.\qq]_a = |w|_a$ the total number of letters $a$ processed during a complete legal execution $w$ of $\xx.\qq$.  The vector $[\xx.\qq] \in \N^A$ is called the \emph{odometer} of $\xx.\qq$.  By Lemma~\ref{l.dichotomy}, the odometer does not depend on the choice of complete legal execution $w$.
\end{definition}

No messages remain at the end of a complete legal execution $w$, so the network ends in state $\pi_w(\xx.\qq) = \zero.t_w(\qq)$.  Hence by \eqref{e.piformula}, the odometer can be written as
	\[ [\xx.\qq] = |w| = \xx + \mathbf{N}(w,\qq) \]
which simply says that the total number of letters processed (of each type $a\in A$) is the sum of the letters input and the letters produced by message passing.
The coordinates of the odometer are the ``detailed local run times'' from \textsection\ref{s.definition}. We can summarize our progress so far in the following theorem.

\begin{theorem}
\label{t.wishlist}
Abelian networks have properties \hyperlink{i.wishlist}{(a)--(e)} from \textsection\ref{s.definition}.
\end{theorem}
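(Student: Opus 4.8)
The plan is to read off all five properties from machinery already in place, namely the Halting Dichotomy (Lemma~\ref{l.dichotomy}) and the odometer (Definition~\ref{d.odometer}). The one conceptual step is the translation between the informal phrase ``order in which the processors act'' and the formal notion of a legal execution: a letter $a$ is available to be processed exactly when it is a legal move, and processing ``eventually stops'' exactly when a complete state is reached. By the observation recorded immediately after Lemma~\ref{l.dichotomy}, a finite legal execution that is not complete can always be extended by a legal move, so a \emph{maximal} finite legal execution is necessarily complete. Thus every terminating order of processor actions corresponds to a complete legal execution.

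I would first settle (a). Lemma~\ref{l.dichotomy} sorts the pair $\xx.\qq$ into one of two cases that depend on $\xx.\qq$ alone. In case~(1) no legal execution is complete, so every finite legal execution can be extended and processing never halts, regardless of order; in case~(2) every legal execution is finite, so processing always terminates. Hence the halting status is determined by $\xx.\qq$ and is independent of the order.

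Next I would treat (b)--(e), which only become meaningful when the network halts, i.e.\ in case~(2). There every maximal legal execution is a complete legal execution $w$, and by Lemma~\ref{l.dichotomy} all such $w$ have the same image $|w| = [\xx.\qq]$ in $\N^A$. Properties (e), (d), (c) then follow as successively coarser sums of odometer coordinates: the specific local run time is $[\xx.\qq]_a$, the local run time at $v$ is $\sum_{a \in A_v} [\xx.\qq]_a$, and the total run time is $\sum_{a \in A} [\xx.\qq]_a$. For (b), no letters remain after a complete execution, so the terminal state is $\pi_w(\xx.\qq) = \zero.t_w(\qq)$; since the maps $t_a$ commute, $t_w(\qq)$ depends only on $|w| = [\xx.\qq]$, exactly as in the proof of Lemma~\ref{l.piscommute}, so the final states of all processors are pinned down.

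I expect no real obstacle in the way of estimates, since none are needed; the substance is entirely bookkeeping. The step deserving the most care is the reduction via the extension observation after Lemma~\ref{l.dichotomy}, which is what guarantees that any order of processing that halts arrives at a complete legal execution. Once this identification is made, a single odometer vector $[\xx.\qq]$ governs all of (b)--(e) and the dichotomy governs (a), so the theorem is a direct compilation of the preceding results.
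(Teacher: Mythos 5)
Your proof is correct and follows essentially the same route as the paper: property (a) from the Halting Dichotomy, properties (c)--(e) from the uniqueness of the odometer, and property (b) from the fact that the odometer and $\qq$ determine the final state $t_w(\qq)$. The extra care you take in identifying ``terminating orders of processor actions'' with complete legal executions (via the extension remark after Lemma~\ref{l.dichotomy}) is a sound elaboration of a step the paper leaves implicit.
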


\begin{proof}
By Lemma~\ref{l.dichotomy} the halting status does not depend on the execution, which verifies item (a). Moreover for a given $\Net,\xx,\qq$ any two complete legal executions  have the same odometer, which verifies items (c)--(e). The odometer and initial state $\qq$ determine the final state $t_w(\qq)$, which verifies (b).
\end{proof}

The next lemma illustrates a general theme of \emph{local-to-global principles} in abelian networks.  Suppose we are given a partition $V = I \sqcup M \sqcup O$ of the vertex set into ``input'', ``mediating'' and ``output'' nodes,
and that the output nodes never send messages (for example, the processor at each output node could be a counter, \textsection \ref{s.counter}). We allow the possibility that $M$ and/or $O$ is empty.
If $\Net$ halts on all inputs, then we can regard the induced subnetwork $(\Proc_v)_{v \in I \cup M}$ of non-output nodes as a single processor $\Proc_{I,M}$ with input alphabet $A_I := \sqcup_{v \in I} A_v$, state space $Q_{I \cup M} := \prod_{v \in I \cup M} Q_v$, and an output port for each edge $(v,u) \in (I \cup M) \times O$.  


\begin{lemma}
\label{l.localglobalabelian}
\moniker{Local Abelianness Implies Global Abelianness}
If $\Net$ halts on all inputs and $\Proc_v$ is an abelian processor for each $v \in I \cup M$, then $\Proc_{I,M}$ is an abelian processor.
\end{lemma}

\begin{proof}
Given an input $\iota \in A_I^*$ and an initial state $\qq \in Q_{I \cup M}$, we can process one letter at a time to obtain a complete legal execution for $|\iota|.\qq$.  Now suppose we are given inputs $\iota, \iota'$  
such that $|\iota| = |\iota'|$.   By Lemma~\ref{l.dichotomy}, any two complete legal executions $w,w'$ for $|\iota|.\qq = |\iota'|.\qq$ satisfy $|w|=|w'|$.  In particular, $t_w(\qq) = t_{w'}(\qq)$, so the final state of $\Proc_{I,M}$ does not depend on the order of input. 

Now given $v \in I \cup M$, let $w_v$ and $w'_v$ respectively be the words obtained from $w$ and $w'$ by deleting all letters not in $A_v$.  Then $|w_v| = |w'_v|$.
\old{
Since $\Proc_v$ is an abelian processor, we have $T_v(w_v,q_v) = T_v(w'_v,q_v)$, so the final state of $\Proc_I$ does not depend on the order of input.  
}
For each edge $(v,u) \in (I \cup M) \times O$, since $\Proc_v$ is an abelian processor, 
	\[ |T_{(v,u)}(w_v,q_v)| = |T_{(v,u)}(w'_v,q_v)| \]
so for each $a \in A_u$ the number of letters $a$ sent along $(v,u)$ does not depend on the order of input.
\end{proof}

For another example of a local-to-global principle, see \cite[Lemma~2.6]{part3} (note that $I=V$ in that example: input is permitted anywhere in the network). 
Further local-to-global principles in the case of rotor networks are explored in \cite{GLPZ12}.

\begin{remark}
In the preceding lemma, the input to an abelian network takes the form of letters sent by the user to nodes in $I$, and the output takes the form of letters received by the user from nodes in $O$. In particular, the user has no access to the internal states of any processors, nor to letters sent or received by the mediating nodes.  In this setup, we can say that the network ``computes'' a function $\N^I \to \N^O$.  
This notion of computation is explored in \cite{HLW16}, which identifies a set of five \emph{abelian logic gates} such that any function computable by a finite abelian processor can be computed by a finite network of abelian logic gates.

In cases when the user has access to the internal states of the processors in $I \cup O$, we can regard the input as a pair $\xx.\qq$ with $\xx \in \N^I$ and $\qq \in Q_I$, and the output as a pair $\xx'.\qq'$ with $\xx' \in \N^O$ and $\qq' \in Q_O$.  For example, in the case of a sandpile or rotor network on $\Z^2$, one might want to think of the network's output as the intricate patterns displayed by the final states of the processors. The next section describes an example when the user benefits from the ability to set up the initial states of the processors as part of the input.
\end{remark}

\old{
\begin{definition} \label{d.finite}
An abelian network $\Net$ is \emph{locally finite} if each processor $\Proc_v$ is a finite-state machine, i.e., each state space $Q_v$ and alphabet $A_v$ is finite. 

$\Net$ is \emph{spatially finite} if its underlying graph $G$ is finite.

$\Net$ is \emph{finite} if it is both locally and spatially finite.
\end{definition}
}

\subsection{Monotone integer programming}
\label{s.MIP}

In this section we describe a class of optimization problems that abelian networks can solve.
Let $A$ be a finite set and $ F : \N^A \to \N^A$ a nondecreasing function: $F(\uu) \leq F(\vv)$ whenever $\uu \leq \vv$ (inequalities are coordinatewise).  Let $\cc \in \R^A$ be a vector with all coordinates positive, and consider the following problem.
	\begin{align} \label{e.MIP} &\textrm{Minimize} && \cc^T \uu \nonumber \\ 
	&\textrm{subject to } && \uu \in \N^A \quad \mbox{ and } \quad F(\uu) \leq \uu. \end{align}
Let us call a vector $\uu \in \N^A$ \emph{feasible} if $F(\uu) \leq \uu$.  If $\uu_1$ and $\uu_2$ are both feasible, then their coordinatewise minimum is feasible:
	\[ F(\min(\uu_1,\uu_2)) \leq \min (F(\uu_1), F(\uu_2)) \leq \min(\uu_1,\uu_2). \]
Therefore if a feasible vector exists then the minimizer is unique and independent of the positive vector $\cc$: it is simply the coordinatewise minimum of all feasible vectors.

Let $\Net$ be an abelian network with finite alphabet $A$ and finite or infinite state space $Q$. Fix $\xx \in \N^A$ and $\qq \in Q$, and let
$F : \N^A \to \N^A$ be given by
	\[ F(\uu) = \xx + \mathbf{N}(\uu,\qq) \]
where $\mathbf{N}(\uu,\qq)$ is defined as $\mathbf{N}(w,\qq)$ for any word $w$ such that $|w|=\uu$. The function $F$ is well-defined and nondecreasing by Lemma~\ref{l.monotonicity}.

Recall the odometer $[\xx.\qq]$ is the vector of detailed local run times (Definition~\ref{d.odometer}).

\begin{theorem} 
\moniker{Abelian Networks Solve Monotone Integer Programs}
\label{t.MIP}
\begin{enumerate}[\em (i)]
\item If $\Net$ halts on input $\xx.\qq$, then $\uu = [\xx.\qq]$ is the unique minimizer of \eqref{e.MIP}.
\item If $\Net$ does not halt on input $\xx.\qq$, then \eqref{e.MIP} has no feasible vector $\uu$.
\end{enumerate}
\end{theorem}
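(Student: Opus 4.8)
The plan is to translate the feasibility constraint $F(\uu) \leq \uu$ into the combinatorial language of executions, after which both parts fall out of the Least Action Principle and the Halting Dichotomy.

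First I would establish a dictionary between feasible vectors and complete executions. Fix $\uu' \in \N^A$ and let $w'$ be any word with $|w'| = \uu'$; since $A$ is finite, $w'$ is a finite word. By \eqref{e.piformula} the number of letters remaining after executing $w'$ from $\xx.\qq$ is $\xx - |w'| + \mathbf{N}(w',\qq) = \xx - \uu' + \mathbf{N}(\uu',\qq)$, where I have used that $\mathbf{N}(w',\qq)$ depends on $w'$ only through $|w'| = \uu'$ (this is precisely what makes $F$ well-defined, by Lemma~\ref{l.monotonicity}). Hence $w'$ is complete for $\xx.\qq$ exactly when $\xx + \mathbf{N}(\uu',\qq) \leq \uu'$, i.e.\ exactly when $F(\uu') \leq \uu'$. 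In words: $\uu'$ is feasible if and only if some (equivalently every) word $w'$ with $|w'| = \uu'$ is a complete execution for $\xx.\qq$.

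For part (i), assuming $\Net$ halts, I would let $w$ be a complete legal execution, so that $[\xx.\qq] = |w| = \xx + \mathbf{N}(w,\qq)$ by Definition~\ref{d.odometer}. Two points then finish the argument. The odometer is itself feasible: evaluating $F$ at $\uu = [\xx.\qq]$ and using $\mathbf{N}(\uu,\qq) = \mathbf{N}(w,\qq)$ gives $F([\xx.\qq]) = \xx + \mathbf{N}(w,\qq) = [\xx.\qq]$. And the odometer is coordinatewise minimal among feasible vectors: given any feasible $\uu'$, pick a finite word $w'$ with $|w'| = \uu'$, which is complete by the dictionary; since $w$ is legal, the Least Action Principle (Lemma~\ref{l.leastaction}) gives $[\xx.\qq] = |w| \leq |w'| = \uu'$. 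Thus $[\xx.\qq]$ is a feasible vector lying below every feasible vector, and as observed just before the theorem statement the coordinatewise minimum of all feasible vectors is the unique minimizer of \eqref{e.MIP}, independent of the positive weights $\cc$. Hence $\uu = [\xx.\qq]$ is that minimizer. For part (ii), assuming $\Net$ does not halt, the Halting Dichotomy (Lemma~\ref{l.dichotomy}) places us in case (1): no finite complete execution exists for $\xx.\qq$. If a feasible $\uu'$ existed, then any finite word $w'$ with $|w'| = \uu'$ would be a finite complete execution by the dictionary, a contradiction; so \eqref{e.MIP} has no feasible vector.

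The entire substance lives in the feasibility--completeness dictionary; once $F(\uu')\leq\uu'$ is recognized as the statement that executing $\uu'$ leaves no letters behind, the two parts are immediate from Lemmas~\ref{l.leastaction} and~\ref{l.dichotomy}. I expect the only delicate point to be the bookkeeping that completeness of $w'$ depends only on $|w'|$, which is exactly the well-definedness of $\mathbf{N}(\cdot,\qq)$ on $\N^A$ guaranteed by monotonicity.
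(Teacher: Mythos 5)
Your proposal is correct and follows essentially the same route as the paper: both hinge on the observation (via \eqref{e.piformula}) that $\uu'$ is feasible if and only if any word $w'$ with $|w'|=\uu'$ is a complete execution, after which part (i) follows from the Least Action Principle and part (ii) from the nonexistence of a finite complete execution. Your extra remark that the odometer actually satisfies $F([\xx.\qq])=[\xx.\qq]$ with equality is a correct refinement of the paper's feasibility claim.
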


\begin{proof}
By \eqref{e.piformula}, any complete execution $w$ for $\xx.\qq$ satisfies $F(|w|) \leq |w|$; and conversely, if $F(\uu) \leq \uu$ then any $w \in A^*$ such that $|w|=\uu$ is a complete execution for $\xx.\qq$.

If $\Net$ halts on input $\xx.\qq$ then the odometer $[\xx.\qq]$ is defined as $|w|$ for a complete \emph{legal} execution $w$. By the least action principle (Lemma~\ref{l.leastaction}), for any complete execution $w'$ we have $|w|_a \leq |w'|_a$ for all $a \in A$. Thus
	\[ [\xx.\qq]_a = \min \{ |w'|_a \, : \, w' \mbox{ is a complete execution for } \xx.\qq \} \]
so $[\xx.\qq]$ is the coordinatewise minimum of all feasible vectors.

If $\Net$ does not halt on input $\xx.\qq$, then there does not exist a complete execution for $\xx.\qq$, so there is no feasible vector.
\end{proof}

For any nondecreasing $F : \N^A \to \N^A$, there is an abelian network $\Net_F$ that solves the corresponding optimization problem \eqref{e.MIP}.
Its underlying graph is a single vertex $v$ with a loop $e=(v,v)$. It has state space $Q = \N^A$, transition function $T_v(a,\qq) = \qq+1_a$ and message passing function satisfying
	\[ |T_e(a,\qq)| = F(\qq+1_a) - F(\qq) \] 
for all $a \in A$ and $\qq \in Q$. For the input we take $\xx=F(\zero)$ and $\qq=\zero$.

\begin{remark} 
In general the problem \eqref{e.MIP} is nonlinear, but in the special case of a \emph{toppling network} 
it is equivalent to a linear integer program of the following form.
	\begin{align} \label{e.IP} &\textrm{Minimize} && \cc^T \vv \nonumber \\ 
	&\textrm{subject to } && \vv \in \N^A \quad \mbox{ and } \quad L \vv \geq \bb. \end{align}
Here $\cc \in \R^A$ has all coordinates positive; $L$ is the Laplacian matrix (\textsection\ref{s.toppling}); and $\bb = \xx - \rr + \one$ where $\xx_a$ is the number of chips input at $a$ and $\rr_a$ is the threshold of $a$.  The coordinate $\vv_a$ of the minimizer is the number of times $a$ topples.  To see the equivalence of \eqref{e.MIP} and \eqref{e.IP} for toppling networks, note that $F$ takes the following form for a toppling network:
	\[ F(\uu) = \xx + (D-L) \floor{D^{-1} \uu} \]
where $D$ is the diagonal matrix with diagonal entries $\rr_a$, and $\floor{\cdot}$ denotes the coordinatewise greatest integer function.
Using that $D-L$ is a nonnegative matrix, one checks that $\uu = \xx + (D-L)\vv$ is feasible for \eqref{e.MIP} if and only if $\vv$ is feasible for \eqref{e.IP}. 
\old{
Indeed, $F(\uu) \leq \uu$ if and only if
	\[ \xx + (D-L) \floor{D^{-1}(\xx - (D-L)\vv))} \leq \xx + (D-L)\vv \]
which happens if and only if
	\[ \floor{\vv + D^{-1}(\xx-L\vv)} \leq \vv \]
which happens if and only if
	\[ x - L\vv \leq \rr-\one. \]
}
We remark that for a \emph{general} integer matrix the problem of whether \eqref{e.IP} has a feasible vector $\mathbf{v}$ is $\NP$-complete (see, for example, \cite{Pap81}) but that the Laplacian $L$ for an abelian network is constrained to have off-diagonal entries $\leq 0$.  See \cite{FL16} for a discussion of the computational complexity of this problem when $L$ is a directed graph Laplacian.
\end{remark}

\section{Concluding Remarks}

We indicate here a few directions for further research on abelian networks. Other directions are indicated in the sequels \cite{part2,part3}.

\subsection{Asynchronous graph algorithms}
\label{s.graphalg}

Chan, Church and Grochow \cite{CCG14} have shown that a rotor network can detect whether its underlying graph is planar (with edge orderings respecting the planar embedding).
Theorem~\ref{t.wishlist} shows that abelian networks can compute asynchronously, and Theorem~\ref{t.MIP} gives an example of something they can compute.
It would be interesting to explore whether abelian networks can perform computational tasks like shortest path, pagerank, image restoration and belief propagation.  We note one practical deficiency of abelian networks: In the words of an anonymous referee, ``determining when a network has finished computing requires some computational overhead'' outside the network.

\subsection{Abelian networks with shared memory}

In \textsection \ref{s.cellular} we have emphasized that abelian networks do not rely on shared memory.  Yet there are quite a few examples of processes with a global abelian property that do.  Perhaps the simplest is \emph{sorting by adjacent transpositions}: suppose $G$ is a finite path and each vertex $v$ has state space $Q_v = \Z$.  The processors now live on the edges: for each edge $e=(v,v+1)$ the processor $\Proc_e$ acts by swapping the states $q(v)$ and $q(v+1)$ if $q(v)>q(v+1)$.  This example does not fit our definition of abelian network because the processors of edges $(v-1,v)$ and $(v,v+1)$ share access to the state $q(v)$.   Indeed, from our list of five goals in \textsection \hyperlink{i.wishlist}{2} this example satisfies items (a)--(c) only: The final output is always sorted, and the run time does not depend on the execution, but the local run times do depend on the execution.  For instance, when $G$ is a path with three vertices and two edges $s_1$ and $s_2$, both $s_1 s_2 s_1$ and $s_2 s_1 s_2$ are complete legal executions for the initial state $(3,2,1)$. The edge $s_1$ performs two swaps in the first execution, but only one swap in the second execution.

What is the right definition of an abelian network with shared memory? Examples could include the numbers game of Mozes \cite{Moz90}, $k$-cores of graphs and hypergraphs, Wilson cycle popping \cite{Wil96} and its extension by Gorodezky and Pak \cite{GP14}, source reversal \cite{GP00} and cluster firing \cite{notes,Bac12,CPS12}.

\subsection{Nonabelian networks}

The work of Krohn and Rhodes \cite{KR65, KR68} led to a detailed study of how the algebraic structure of monoids relates to the computational strength of corresponding classes of automata. It would be highly desirable to develop such a dictionary for classes of automata \emph{networks}. Thus one would like to weaken the abelian property and study networks of solvable automata, nilpotent automata, etc.  Such networks are nondeterministic --- the output depends on the order of execution --- so their theory promises to be rather different from that of abelian networks.  
It could be fruitful to look for networks that exhibit only limited nondeterminism. A concrete example 
is a sandpile network with annihilating particles and antiparticles, studied by Robert Cori (unpublished) and in \cite{CPS12} under the term ``inverse toppling.''

\silentsec{Acknowledgments}

The authors thank Spencer Backman, Olivier Bernardi, Deepak Dhar, Anne Fey, Sergey Fomin, Christopher Hillar, Michael Hochman, Alexander Holroyd, Benjamin Iriarte, Mia Minnes, Ryan O'Donnell, David Perkinson, James Propp, Leonardo Rolla, Farbod Shokrieh, Allan Sly and Peter Winkler for helpful discussions. We thank two anonymous referees for insightful comments that improved the paper.

This research was supported by an NSF postdoctoral fellowship and NSF grants DMS-1105960 and DMS-1243606, and by the UROP and SPUR programs at MIT.


\begin{thebibliography}{99}


\bibitem[AR08]{AR08}  
F. C. Alcaraz and V. Rittenberg,
Directed abelian algebras and their applications to stochastic models
\emph{Phys.\ Rev.\ E} \textbf{78}:041126, 2008.
\arxiv{0806.1303}

\bibitem[APR09]{APR09}
    F. C. Alcaraz, P. Pyatov and V. Rittenberg,
   Two-component abelian sandpile models
   \emph{Phys.\ Rev.\ E} \textbf{79}:042102, 2009.
   \arxiv{0810.4053}

\bibitem[Bac12]{Bac12} Spencer Backman, A bijection between the recurrent configurations of a hereditary chip-firing model and spanning trees. \arxiv{1207.6175}

\bibitem[BTW87]{BTW87} Per Bak, Chao Tang and Kurt Wiesenfeld,
Self-organized criticality: an explanation of the $1/f$ noise,
{\it Phys.\ Rev.\ Lett.} {\bf 59}(4):381--384, 1987.

\bibitem[BW03]{BW03} Itai Benjamini and David B. Wilson, Excited random walk, \emph{Elect.\ Comm.\ Probab.} \textbf{8}:86--92, 2003.

\bibitem[Big99]{Big99} Norman L. Biggs, Chip-firing and the critical group of a graph, \emph{J. Algebraic Combin.}~\textbf{9}(1):25--45, 1999.

\bibitem[BW97]{BW97} Norman L. Biggs and Peter Winkler,  Chip-firing and the chromatic polynomial. Technical
Report LSE-CDAM-97-03, London School of Economics, Center for Discrete
and Applicable Mathematics, 1997.

\bibitem[BLS91]{BLS91} Anders Bj\"{o}rner, L\'{a}szl\'{o} Lov\'{a}sz and Peter Shor, Chip-firing games on graphs, {\it European J. Combin.} {\bf 12}(4):283--291, 1991.

\bibitem[BL16a]{part2} Benjamin Bond and Lionel Levine, Abelian networks II. Halting on all inputs. \emph{Selecta Math.}, to appear. 
\arxiv{1409.0169}

\bibitem[BL16b]{part3} Benjamin Bond and Lionel Levine, Abelian networks III. The critical group. \emph{J. Alg.\ Combin.}, to appear.
\arxiv{1409.0170}

\bibitem[CGHL16]{oil} 
Elisabetta Candellero, Shirshendu Ganguly, Christopher Hoffman and Lionel Levine,
Oil and water: a two-type internal aggregation model. \arxiv{1408.0776}

\bibitem[CPS12]{CPS12} 
Sergio Caracciolo, Guglielmo Paoletti and Andrea Sportiello, Multiple and inverse topplings in the abelian sandpile model. 
\emph{The European Physical Journal-Special Topics} \textbf{212}(1)23--44, 2012.
\arxiv{1112.3491}

\bibitem[CCG14]{CCG14} 
Melody Chan, Thomas Church and Joshua A. Grochow,
Rotor-routing and spanning trees on planar graphs.
\emph{Int.\ Math.\ Res.\ Not.} (2014): rnu025.
\arxiv{1308.2677}

\bibitem[CMS13]{CMS13}
Yao-ban Chan, Jean-Fran\c{c}ois Marckert, and Thomas Selig, A natural stochastic extension of the sandpile model on a graph, \emph{J. Combin.\ Theory  A} \textbf{120}(7):1913--1928, 2013. \arxiv{1209.2038}

\bibitem[CP05]{CP05} Denis Chebikin and Pavlo Pylyavskyy, A family of bijections between $G$-parking functions and spanning trees, \emph{J. Combin.\ Theory A} \textbf{110}(1):31--41, 2005. \arxiv{math/0307292}

\bibitem[CS06]{CS06} Joshua Cooper and Joel Spencer, Simulating a random walk with constant error,
    \emph{Combin.\ Probab.\ Comput.}
    \textbf{15}:815--822, 2006.
    \arxiv{math/0402323}
    
\bibitem[CL03]{CL03} Robert Cori and Yvan Le Borgne. The sand-pile model and Tutte polynomials. 
\emph{Adv.\ in Appl.\ Math.}
\textbf{30}(1-2):44--52, 2003. 

\bibitem[DD14]{DD14} Rahul Dandekar and Deepak Dhar, Proportionate growth in patterns formed in the rotor-router model, \emph{J. Stat.\ Mech.} \textbf{11}:P11030, 2014. \arxiv{1312.6888}

\bibitem[DR04]{DR04} Arnoud Dartois and Dominique Rossin, Height-arrow model, \emph{Formal Power Series and Algebraic Combinatorics}, 2004.

\bibitem[Dha90]{Dha90} Deepak Dhar, Self-organized critical state of sandpile automaton models, {\it Phys.\ Rev.\ Lett.} {\bf 64}:1613--1616, 1990.

\bibitem[Dha99a]{Dha99a} Deepak Dhar, The abelian sandpile and related models, 
\emph{Physica A} \textbf{263}:4--25, 1999.
\arxiv{cond-mat/9808047}

\bibitem[Dha99b]{Dha99b} Deepak Dhar, Studying self-organized criticality with exactly solved models,
1999. \arxiv{cond-mat/9909009}

\bibitem[Dha99c]{Dha99c} Deepak Dhar, Some results and a conjecture for Manna's stochastic sandpile model,
\emph{Physica A} \textbf{270}:69--81, 1999. \arxiv{cond-mat/9902137}

\bibitem[Dha06]{Dha06} Deepak Dhar, Theoretical studies of self-organized criticality, \emph{Physica A} \textbf{369}:29--70, 2006.

\bibitem[DSC09]{DSC09} Deepak Dhar, Tridib Sadhu and Samarth Chandra, Pattern formation in growing sandpiles, \emph{Europhysics Lett.} \textbf{85}:48002, 2009. \arxiv{0808.1732}

\bibitem[DF91]{DF91} Persi Diaconis and William Fulton, A growth model, a game, an algebra, Lagrange inversion, and characteristic classes, {\it Rend.\ Sem.\ Mat.\ Univ.\ Pol.\ Torino} {\bf 49}(1):95--119, 1991.

\bibitem[DRS10]{DRS10} Ronald Dickman, Leonardo T. Rolla and Vladas Sidoravicius, Activated random walkers: facts, conjectures and challenges, \emph{J. Stat.\ Phys.} \textbf{138}(1-3):126--142, 2010. \arxiv{0910.2725}

\bibitem[DFS08]{DFS08} Benjamin Doerr, Tobias Friedrich and Thomas Sauerwald, Quasirandom rumor spreading,
 \emph{Proceedings of the nineteenth annual ACM-SIAM symposium on discrete algorithms (SODA '08)},
 pages 773--781, 2008.
 \arxiv{1012.5351}

\bibitem[Ent87]{Ent87} Aernout C. D. van Enter, Proof of Straley's argument for bootstrap percolation. \emph{J.
Stat.\ Phys.} \textbf{48}(3-4):943--945, 1987.

\bibitem[Eri96]{Eri96} Kimmo Eriksson, Chip-firing games on mutating graphs, \emph{SIAM J. Discrete Math.} \textbf{9}(1):118--128, 1996.

\bibitem[FLP10]{FLP10} Anne Fey, Lionel Levine and Yuval Peres, Growth rates and explosions in sandpiles, \emph{J. Stat.\ Phys.} 138:143--159, 2010. \arxiv{0901.3805}

\bibitem[FMR09]{FMR09} Anne Fey, Ronald Meester, and Frank Redig, Stabilizability and percolation in the infinite volume sandpile model, \emph{Ann.\ Probab.} \textbf{37}(2):654-675, 2009. \arxiv{0710.0939}

\bibitem[Fre93]{Fre93} Vidar Frette, Sandpile models with dynamically varying critical slopes, \emph{Phys.\ Rev.\ Lett.} \textbf{70}:2762--2765, 1993.

\bibitem[FL13]{FL13} Tobias Friedrich and Lionel Levine, Fast simulation of large-scale growth models, \emph{Random Struct.\ Alg.} \textbf{42}:185--213, 2013.  \arxiv{1006.1003}.

\bibitem[FL16]{FL16} Matthew Farrell and Lionel Levine, CoEulerian graphs, \emph{Proc.\ Amer.\ Math.\ Soc.}, to appear. \arxiv{1502.04690}

\bibitem[Gab94]{Gab94} Andrei Gabrielov, Asymmetric abelian avalanches and sandpiles.  Preprint, 1994.
\url{http://www.math.purdue.edu/~agabriel/asym.pdf}

\bibitem[GM97]{GM97} Eric Goles and Maurice Margenstern, Universality of the chip-firing game, \emph{Theoret.\ Comp.\ Sci.} \textbf{172}(1): 121--134, 1997.

\bibitem[GP00]{GP00} Eric Goles and Erich Prisner, Source reversal and chip firing on graphs, \emph{Theoret.\ Comp.\ Sci.} \textbf{233}:287--295, 2000. 

\bibitem[GP14]{GP14} Igor Gorodezky and Igor Pak, Generalized loop-erased random walks and approximate reachability, \emph{Random Struct.\ Alg.} \textbf{44}(2):201--223, 2014.

\bibitem[GLPZ12]{GLPZ12} Giuliano Pezzolo Giacaglia, Lionel Levine, James Propp and Linda Zayas-Palmer, Local-to-global principles for the hitting sequence of a rotor walk, \emph{Electr.\ J. Combin.} \textbf{19}:P5, 2012. \arxiv{1107.4442}

\bibitem[Hol03]{Hol03} Alexander E. Holroyd, Sharp metastability threshold for two-dimensional bootstrap
percolation, \emph{Probab.\ Theory Related Fields}, \textbf{125}(2):195--224, 2003. \arxiv{math/0206132}

\bibitem[H$^+$08]{notes} Alexander E. Holroyd, Lionel Levine, Karola M\'{e}sz\'{a}ros, Yuval Peres, James Propp and David B. Wilson, Chip-firing and rotor-routing on directed graphs, in \emph{In and out of equilibrium 2}, pages 331--364, Progress in Probability \textbf{60}, Birkh\"{a}user, 2008.  \arxiv{0801.3306}

\bibitem[HLW16]{HLW16} Alexander E. Holroyd, Lionel Levine and Peter Winkler, Abelian logic gates. \arxiv{1511.00422}

\bibitem[HP10]{HP10} Alexander E. Holroyd and James G. Propp, Rotor walks and Markov chains, in \emph{Algorithmic Probability and Combinatorics}, American Mathematical Society, 2010.
\arxiv{0904.4507}

\bibitem[HS11]{HS11}
Wilfried Huss and Ecaterina Sava, Rotor-router aggregation on the comb, 
\emph{Electr.\ J. Combin.} \textbf{18}(1):P224, 2011. 
\arxiv{1103.4797}

\bibitem[KL10]{KL10} Wouter Kager and Lionel Levine, Rotor-router aggregation on the layered square lattice, \emph{Electr.\ J. Combin.} \textbf{17}:R152, 2010. \arxiv{1003.4017} 


\bibitem[KS05]{KS05} Harry Kesten and Vladas Sidoravicius, The spread of a rumor or infection in a moving population, \emph{Ann. Probab.} \textbf{33}:2402--2462, 2005. \arxiv{math/0312496}

\bibitem[KS08]{KS08} Harry Kesten and Vladas Sidoravicius, A shape theorem for the spread of an infection, \emph{Ann.\ Math.} \textbf{167}:701--766, 2008. \arxiv{math/0312511}

\bibitem[KR65]{KR65} Kenneth Krohn and John Rhodes, Algebraic theory of machines. I. Prime decomposition theorem for finite semigroups and machines, \emph{Trans.\ Amer.\ Math.\ Soc.} \textbf{116}:450--464, 1965.

\bibitem[KR68]{KR68} Kenneth Krohn and John Rhodes, Complexity of finite semigroups, \emph{Ann.\ Math.} \textbf{88}:128--160, 1968.

\bibitem[Lev14]{threshold} Lionel Levine, Threshold state and a conjecture of Poghosyan, Poghosyan, Priezzhev and Ruelle, \emph{Comm.\ Math.\ Phys.}, to appear, 2014. \arxiv{1402.3283}

\bibitem[LP09]{LP09} Lionel Levine and Yuval Peres, Strong spherical asymptotics for rotor-router aggregation and the divisible sandpile, \emph{Potential Anal.} \textbf{30}:1--27, 2009. \arxiv{0704.0688}

\bibitem[Man91]{Man91} S. S. Manna, Two-state model of self-organized criticality,
\emph{J. Phys.\ A: Math.\ Gen.} \textbf{24}:L363, 1991.

\bibitem[MM11]{MM11} Juan Andres Montoya and Carolina Mejia, The computational complexity of the abelian sandpile model, 2011. \url{http://matematicas.uis.edu.co/jmontoya/sites/default/files/notas-ASM.pdf}

\bibitem[MN99]{MN99} Cristopher Moore and Martin Nilsson. The computational complexity of sandpiles. \emph{J. Stat.\ Phys.} \textbf{96}:205--224, 1999.

\bibitem[Moz90]{Moz90} Shahar Mozes, Reflection processes on graphs and Weyl groups, 
\emph{J. Comb.\ Theory A} \textbf{53}(1):128--142, 1990.

\bibitem[Ost03]{Ost03} Srdjan Ostojic, Patterns formed by addition of grains to only one site of an abelian sandpile, \emph{Physica A} \textbf{318}:187--199, 2003.

\bibitem[Pap81]{Pap81} Christos H. Papadimitriou, On the complexity of integer programming, \emph{Journal of the ACM} \textbf{28}(4):765--768, 1981.

\bibitem[PDDK96]{PDDK96} V. B. Priezzhev, Deepak Dhar, Abhishek Dhar and Supriya Krishnamurthy, Eulerian walkers as a model of self-organised criticality, \emph{Phys.\ Rev.\ Lett.} \textbf{77}:5079--5082, 1996. \arxiv{cond-mat/9611019}

\bibitem[PS04]{PS04}
Alexander Postnikov and Boris Shapiro, Trees, parking functions, syzygies, and deformations of monomial ideals. \emph{Trans.\ Amer.\ Math.\ Soc.} \textbf{356}(8):3109--3142, 2004. \arxiv{math.CO/0301110}

\bibitem[PS13]{PS13} Wesley Pegden and Charles K. Smart, Convergence of the Abelian sandpile, Duke Mathematical Journal \textbf{162}(4):627--642, 2013. \arxiv{1105.0111}

\bibitem[Pro03]{Pro03} James Propp, Random walk and random aggregation, derandomized, 2003. 
\url{http://research.microsoft.com/apps/video/default.aspx?id=104906}

\bibitem[Pro10]{Pro10} James Propp, Discrete analog computing with rotor-routers. \emph{Chaos} \textbf{20}:037110, 2010. \arxiv{1007.2389}

\bibitem[RS12]{RS12} Leonardo T. Rolla and Vladas Sidoravicius,
Absorbing-state phase transition for driven-dissipative stochastic dynamics on $\Z$, \emph{Inventiones Math.} \textbf{188}(1):127--150, 2012.  \arxiv{0908.1152}

\bibitem[Tse90]{Tse90} Paul Tseng, Distributed computation for linear programming problems satisfying a certain diagonal dominance condition, \emph{Mathematics of Operations Research} \textbf{15}(1):33--48, 1990.

\bibitem[WLB96]{WLB96} Israel A. Wagner, Michael Lindenbaum and Alfred M. Bruckstein, Smell as a computational resource --- a lesson we can learn from the ant, \emph{4th Israeli Symposium on Theory of Computing and Systems}, pages 219--230, 1996.

\bibitem[Wil96]{Wil96} David B. Wilson, Generating random spanning trees more quickly than the cover
             time, \emph{28th {A}nnual {ACM} {S}ymposium on the {T}heory of {C}omputing (STOC '96)},
   pages 296--303, 1996.

\end{thebibliography}
\end{document}